\newtheorem{theorem}{Theorem}
\begin{document}
\title{A Production Function with Variable Elasticity of Factor Substitution}
\author{Constantin Chilarescu}
\date{}

\maketitle

\centerline{\it Laboratoire CLERSE Universit\'{e} de Lille, France}

\centerline{\it E-mail: Constantin.Chilarescu@univ-lille.fr}

\maketitle

\begin{abstract}
The main aim of this paper is to prove the existence of a new production function with variable elasticity of factor substitution. This production function is a more general form which includes the Cobb-Douglas production function and the $CES$ production function as particular cases. The econometric estimates presented in the paper confirm some other results and reinforces the conclusion that the sigma is well-below the Cobb-Douglas value of one.
\end{abstract}

{\small {\bf Keywords}: variable elasticity of factor substitution; production function.}

{\small {\bf JEL Classifications}: C50, C60, D20}

\date{}
\maketitle
\section{Introduction}
The CES production function developed by Arrow et al. $(1961)$ is one of the most known and analysed production function. It has been extensively studied both in regard to its theoretical properties and its empirical implications. A large number of papers were published on this subject and some authors tried to generalize the result obtained by Arrow et al., based especially on the two weakest points of this function. The first one refers to the assumption that there exists a relationship only between the value added per capita and wage rates, independent of the per capita stock of capital. The second one refers to the fact that the elasticity of factor substitution is a positive constant. As it is well-known, some econometric studies suggested that the elasticity of factor substitution is strongly correlated with the capital labor ratio. Wise and Yeh $(1965)$ compared production functions of several countries and found that
the elasticity of factor substitution first increases (to a value above unity) and then decreases (to a value below unity) as capital accumulates faster than labor.

As a consequence of the paper published by Arrow et al., the next decades have witnessed an enormous increase in the amount of papers dedicated to the study of production functions. Among these papers, we mention here only those with a significant impact on the later developments, as it was the case of the papers of: Uzawa $(1962, 1967)$, McFadden $(1963)$, Liu and Hildebrand $(1965)$, Dhrymes $(1965)$, Kmenta $(1967)$, Sato $(1967, 1970, 1975, 1980)$, Sato and Hoffman $(1968)$, Lu and Fletcher $(1968)$, Zellner and Revankar $(1969)$, Revankar $(1971)$, Beckman et al. $(1972)$, Kim $(1992)$, De La Grandville $(1997)$ and Barellia and Pess\^{o}a $(2003)$.

As it is well-known, the elasticity of factor substitution is a measure of the ease to shift between capital and labour. Accordingly to this definition, the elasticity of factor substitution varies with the capital labor ratio. The larger the elasticity of factor substitution, the easier to substitute and vice versa. Therefore, it is difficult to accept that the elasticity of factor substitution could be a constant, and even less that it could be equal to one, as is the case with the Cobb Douglas function. Some recent studies confirm this hypothesis (see for example the paper of Mallick $2012$).

As we pointed out above, some econometric studies show that this measure has first an increasing trajectory for increasing values of $k$ and then a decreasing one. In other words, this function is first an increasing concave function of $k$ and then a convex decreasing function of $k$. As was pointed out by Klump and De La Grandville $(2000)$, if the production function is homogeneous of degree one (and this is the case of production functions considered in this paper) and if the elasticity of factor substitution is less than one, we may wonder whether the elasticity of substitution would be an increasing function of $k$ for large values of $k$. This claim seems to be true, but only for some values of $k$ lower than a limit of saturation.

\section{Some Production Functions with Variable Elasticity of Substitution}
In this paper, we consider only the case of production functions assumed to be homogeneous of degree one. Among the production functions with variable elasticity of factor substitution considered here, only two seem to respect this property, the production function developed by Liu and Hildebrand, and that provided by Lu and Fletcher. The first tentative to generalize the $CES$ production function is that of Liu and Hildebrand. They have obtained the first production function with a variable elasticity of factor substitution. Liu and Hildebrand assumed a log-linear relationship between output per-capita $y=y(k)=\frac{F(K,L)}{L}$, wage rate $\omega=y-ky^{\prime}$ and the capital labor ratio $k=\frac{K}{L}$
\begin{equation}\label{eqLH}
\ln\left(y\right)=\ln\left(a\right)+b\ln\left(\omega\right)+c\ln\left(k\right),
\end{equation}
where $a$, $b$ and $c$ are assumed to be non-negative real constants (see the doctoral thesis of  Lu, $1967$).
The equation \eqref{eqLH} may be successively rewritten
\begin{equation}\label{eqLH1}
y=a\left(y-k\frac{dy}{dk}\right)^bk^c\Rightarrow y=a\left(y^2\frac{dz}{dk}\right)^bk^c,\;z=\frac{k}{y},
\end{equation}
and after some manipulations we find
\begin{equation}\label{eqLH2}
k^{\frac{1-2b-c}{b}}dk=a^{\frac{1}{b}}z^{\frac{1-2b}{b}}dz,
\end{equation}
so that, under the hypotheses $b\neq 1$ and $b+c\neq 1$, integrating one obtains
\begin{equation}\label{eqLH3}
\frac{b k^\frac{1-b-c}{b}}{1-b-c}=\frac{b z^{\frac{1-b}{b}}}{1-b}a^{\frac{1}{b}}+\xi\Rightarrow
\frac{(1-b)a^{-\frac{1}{b}}}{1-b-c} k^{-\frac{c}{b}}+\frac{\xi(b-1)a^{-\frac{1}{b}}}{b}k^{\frac{b-1}{b}}=y^{\frac{b-1}{b}},
\end{equation}
where $\xi$ is a constant of integration and thus the production function is given by.
\begin{equation}\label{eqLHF}
y=a^{\frac{1}{1-b}}\left[\frac{\xi(b-1)}{b}k^{\frac{b-1}{b}}+
\frac{b-1}{b+c-1}k^{-\frac{c}{b}}\right]^{\frac{b}{b-1}}.
\end{equation}
We can express this function in terms of $K$ and $L$ and thus we finally obtain
\begin{equation}\label{LHPF}
F\left(K,L\right)=a^{\frac{1}{1-b}}\left[\frac{\xi(b-1)}{b}K^{\frac{b-1}{b}}+
\frac{b-1}{b+c-1}K^{-\frac{c}{b}}L^{\frac{b+c-1}{b}}\right]^{\frac{b}{b-1}}.
\end{equation}
Let $R=R(k)$ stand for the marginal rate of substitution between $K$ and $L$, that is
$R=F_L/F_K,$
where $F_X$ signifies the derivative of $F$ $w.r.t.$ $X$.
Then the elasticity of substitution $\sigma=\sigma(k)$ is defined simply as the elasticity of $k=K/L$ with respect to $R$, that is $\sigma=\frac{dk}{k}/\frac{dR}{R}$.
Under the hypothesis of a homogeneous production function of degree one, the marginal rate of substitution and the elasticity of substitution can be expressed in the following representation
\begin{equation}\label{eqmrs}
R(k)=\frac{y}{y^{\prime}}-k
\end{equation}
and
\begin{equation}\label{eqefs}
\sigma(k)=\frac{y^{\prime}\left(ky^{\prime}-y\right)}{kyy^{\prime\prime}}.
\end{equation}
For the function given by relation \eqref{eqLHF}, the marginal rate of substitution and the elasticity of factor substitution will be given by
\begin{equation}\label{eqRH}
R(k)=\frac{-b(b+c-1)k}{\xi(1-b)(b+c-1) k^{\frac{b+c-1}{b}}+bc},
\end{equation}
\begin{equation}\label{eqefsLH}
\sigma(k)=\frac{b\left[\xi(1-b)(b+c-1)k^{\frac{b+c-1}{b}}+bc\right]}{\xi(1-b)(b+c-1)(1-c)k^{
\frac{b+c-1}{b}}+b^2c}.
\end{equation}
The derivatives $wrt$ $k$ yield
\begin{equation}\label{eqderivR}
R^{\prime}(k)=-(b+c-1)\frac{\xi(1-b)(1-c)(b+c-1)k^{\frac{b+c-1}{b}}+b^2c}{\left[\xi(1-b)(b+c-1) k^{\frac{b+c-1}{b}}+bc\right]^2},
\end{equation}
\begin{equation}\label{eqderivsigma}
\sigma^{\prime}(k)=\xi (1-b)(b+c-1)\frac{bc(b+c-1)^2k^{-\frac{c+1}{b}}}{\left[\xi(1-b)(b+c-1)(1-c)k^{\frac{b-1}{b}}+
b^2ck^{-\frac{c}{b}}\right]^2},
\end{equation}
and therefore the sign of $\sigma^{\prime}(k)$ will depend on the sign of $\xi(1-b)(b+c-1)$.
In his doctoral thesis, Lu proved that $c < \beta < 1$, where $\beta$ is the relative share of capital.
We observe from this relation that the constant $\xi$ will influence only the speed of increase or decrease of $\sigma$.

Let us now examine the properties of this production function.
\begin{enumerate}
\item If $c=0$ and $b\neq 1$, then we get $R(k)=\frac{1-\delta}{\delta}k^{\frac{1}{b}},$ $\delta=\frac{\xi(b-1)}{b+\xi(b-1)}$ and $\sigma=b$, that is we are in the case of the $CES$ production function.
\item If we put now $b=1$, then we obtain $R(k)=\frac{1-\beta}{\beta}k$, $\delta=\beta$ and $\sigma = 1$, that is we are in the case of the Cobb-Douglas production function. The same result will be obtained via the relation \eqref{eqLH}. If we put $c=0$ and $b=1$, then we get $R(k)=\frac{1}{a-1}k,$ with $a=\frac{1}{1-\beta}$, or alternatively, if we consider $b=0$ and $c\neq 0$, to obtain $R(k)=\frac{1-c}{c}k,$ with $c=\beta$.
\item If $b,\;c\in (0, 1)$ and $b+c>1$, then
$$R(k) \rightarrow \frac{-b^2}{\xi(1-b)(b+c-1)}k^{\frac{1-c}{b}},\;\xi < 0\;\mbox{and}\;\sigma(k) \rightarrow \frac{b}{1-c},$$
representing the property of a $CES$ production function. As we can observe from the relations \eqref{eqderivR} and \eqref{eqderivsigma}, $R$ is an increasing function of $k$, and $\sigma$ is a decreasing function of $k$.
\item If $b,\;c\in (0, 1)$ and $b+c<1$, then
$$R(k) \rightarrow \frac{1-b-c}{c}k,\;\xi < 0\;\mbox{and}\;\sigma(k) \rightarrow 1,$$
representing the property of a Cobb-Douglas production function. As we can observe from the relations \eqref{eqderivR} and \eqref{eqderivsigma}, both $R$ and $\sigma$ are a increasing function of $k$.
\end{enumerate}
A few years later, Lu $(1967)$ and Lu and Fletcher $(1968)$, assuming the same log-linear relationship \eqref{eqLH}, but using a different computational procedure, provided another production function with variable elasticity of factor substitution.
The equation \eqref{eqLH} may be successively rewritten
\begin{equation}\label{eqLF0}
\frac{dy}{dk}=\frac{y}{k}-a^{-\frac{1}{b}} k^{-\frac{c}{b}-1}y^{\frac{1}{b}}\Rightarrow\frac{dz}{dk}+\frac{1-b}{b}
\frac{z}{k}=a^{-\frac{1}{b}}\frac{1-b}{b}k^{-\frac{c}{b}-1},
\end{equation}
with $z=y^{1-\frac{1}{b}}$. This last equation cab be still written
$$k^{\frac{1}{b}-1}\left[\frac{dz}{dk}+\frac{1-b}{b}
\frac{z}{k}\right]=a^{-\frac{1}{b}}\frac{1-b}{b}k^{\frac{1-2b-c}{b}},$$
or equivalently,
\begin{equation}\label{eqLF01}
\frac{d}{dk}\left[k^{\frac{1}{b}-1}z
\right]=a^{-\frac{1}{b}}\frac{1-b}{b}k^{\frac{1-2b-c}{b}},
\end{equation}
whose solution gives
\begin{equation}\label{eqLF02}
z=a^{-\frac{1}{b}}\frac{1-b}{1-b-c} k^{-\frac{c}{b}}+\zeta k^{\frac{1}{b}-1}
\end{equation}
where $\zeta$ is a constant of integration. By transforming $z$ back to $y$, yields the following production function
\begin{equation}\label{eqprpcLF}
y=a^{\frac{1}{1-b}}\left[\zeta a^{\frac{1}{b}} k^{\frac{b-1}{b}}+\frac{b-1}{b+c-1} k^{-\frac{c}{b}}\right]^{\frac{b}{b-1}}.
\end{equation}
The elasticity of factor substitution yields
\begin{equation}\label{eqefslf}
\sigma(k)=\frac{\zeta b(1-b-c)k^{\frac{b-1}{b}}+bca^{-\frac{1}{b}}k^{-\frac{c}{b}}}{\zeta(1-c)(1-b-c)k^{\frac{b-1}{b}}+bca^{-\frac{1}{b}}k^{-\frac{c}{b}}}.
\end{equation}
We can easily prove that the two results obtained by Liu and Hildebrand and by Lu and Fletcher are in fact identical. Indeed, if we denote $\zeta=\xi\frac{b-1}{b}a^{-\frac{1}{b}}$, then the production function determined by Liu and Hildebrand will coincide with that of Lu and Fletcher. In his doctoral thesis Lu proved that $0 < c < 1$, but the problem is that we do not know the sign of the two constant of integration $\zeta$ and $\xi$ and thus it is difficult to accept the conclusion of Lu, which claimed that the elasticity of factor substitution $\sigma$, depends only on the parameters $b$ and $c$.

Finally we point out here an interesting result of Sato $(1967)$ and Sato and Hoffman $(1968)$, who proved that if the elasticity of factor substitution is a linear function of the capital labor ratio,
\begin{equation}\label{eqsigmash}
\sigma(k)=a+bk,
\end{equation}
then a unique explicit production function exists. If $a=1$, then it can be shown that this production function becomes
$$
F(K,L)=\gamma K^{\alpha(1-\delta\rho)}\left[L+(\rho-1)K\right]^{\alpha\delta\rho},
$$
with
$$\delta\in(0, 1),\;\delta\rho\in[0, 1],\;\frac{K}{L}<\frac{1-\delta\rho}{1-\rho}\;\mbox{and}\;\sigma(k)=1+\frac{\rho-1}{1-\delta\rho}k.$$
This result was also obtained by Revankar $(1971)$.

\section{A new production function with variable elasticity of substitution}
As in the paper of Sato and Hoffman, we consider the case of a production function assumed to be homogeneous of degree one. Sato and Hoffman tried to obtain some new production functions by considering various hypotheses on the elasticity of factor substitution $\sigma$. The method developed by the two authors, enables us to obtain the two well-known production functions. Indeed, if in equation \eqref{eqefs} we put $\sigma(k) = 1$, then we obtain the Cobb-Douglas production function and if we put the $\sigma(k) = \sigma = constant$, then we obtain the $CES$ production function.

As it is well-known, the marginal rate of substitution tells us how much of one factor is needed to be removed, in order to compensate for an increase in another factor, so that the output remains unchanged. A simply computational procedure shows that, in the case of the Cob-Douglas production function, we have $R(k)=\frac{\beta}{1-\beta}k,$ or in other words, $R$ depends linearly on $k$, and in the case of the $CES$ production function we have $R(k)=\frac{1-\delta}{\delta}k^{\frac{1}{\sigma}},$ that is, $R$ depends nonlinearly on $k$.

Our approach, in order to obtain a new production function with variable elasticity of substitution, is different to that of cited authors. We do not focus on the elasticity of substitution, but on the marginal rate of substitution $R$. The main result of this paper is given by the following theorem.
\begin{theorem}
If there exist three real constant $\lambda\neq-1$, $\mu\neq 0$ and $\theta\neq 1$ such that the marginal rate of substitution is given by
\begin{equation}\label{eqR}
R(k) = \lambda k+\mu k^\theta,
\end{equation}
then the production function is given by:
\begin{equation}\label{eqsoly}
y=\psi\left[\left(1+\lambda\right)k^{1-\theta}+\mu\right]^{\frac{1}{\left(1+\lambda\right)(1-\theta)}},
\end{equation}
where $\psi>0$ is a constant of integration.
\end{theorem}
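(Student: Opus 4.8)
The plan is to work directly from the representation \eqref{eqmrs} of the marginal rate of substitution valid for a degree-one homogeneous production function, namely $R(k)=y/y^{\prime}-k$, and to turn the hypothesis \eqref{eqR} into a first-order ordinary differential equation for $y=y(k)$. Substituting \eqref{eqR} into \eqref{eqmrs} gives $y/y^{\prime}=R(k)+k=(1+\lambda)k+\mu k^{\theta}$, hence the separable equation
\[
\frac{y^{\prime}}{y}=\frac{1}{(1+\lambda)k+\mu k^{\theta}}.
\]
The assumptions $\lambda\neq-1$ and $\mu\neq0$ ensure that the denominator is a genuine two-term expression and does not degenerate, while $\theta\neq1$ will be needed at the integration step.

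Next I would integrate both sides. Writing the denominator as $k^{\theta}\bigl[(1+\lambda)k^{1-\theta}+\mu\bigr]$ and performing the change of variable $u=k^{1-\theta}$, so that $du=(1-\theta)k^{-\theta}\,dk$, reduces the right-hand integral to
\[
\frac{1}{1-\theta}\int\frac{du}{(1+\lambda)u+\mu}
=\frac{1}{(1+\lambda)(1-\theta)}\ln\bigl|(1+\lambda)k^{1-\theta}+\mu\bigr|+\mathrm{const}.
\]
Both $\theta\neq1$ (so the substitution is non-degenerate) and $\lambda\neq-1$ (so the inner integral is logarithmic rather than linear) are precisely what is required here. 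Integrating the left-hand side gives $\ln y$, so exponentiating and absorbing the integration constant into $\psi=e^{\mathrm{const}}>0$ yields exactly \eqref{eqsoly}.

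Finally I would check consistency: differentiating \eqref{eqsoly} and forming $y/y^{\prime}-k$ should return $\lambda k+\mu k^{\theta}$, confirming that the derived $y$ indeed realizes the prescribed marginal rate of substitution; one may also observe that $F(K,L)=L\,y(K/L)$ is then homogeneous of degree one by construction, so the use of \eqref{eqmrs} is legitimate. The only genuine obstacle is the integration of $1/\bigl[(1+\lambda)k+\mu k^{\theta}\bigr]$: it is elementary once one spots the substitution $u=k^{1-\theta}$, but it is also the point at which all three non-degeneracy hypotheses on $\lambda$, $\mu$, $\theta$ enter, the excluded boundary cases being exactly those that collapse the answer onto the Cobb--Douglas and $CES$ forms and would otherwise require separate treatment.
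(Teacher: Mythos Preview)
Your proposal is correct and follows essentially the same route as the paper: substitute \eqref{eqR} into the representation $R(k)=y/y'-k$ to obtain the separable equation $dy/y=dk/[(1+\lambda)k+\mu k^{\theta}]$, integrate, and exponentiate. The only cosmetic difference is in how the antiderivative on the right is computed: you use the substitution $u=k^{1-\theta}$, whereas the paper rewrites the integrand via the partial-fraction-type identity
\[
\frac{1}{(1+\lambda)k+\mu k^{\theta}}
=\frac{1}{1+\lambda}\left[\frac{1}{k}-\frac{\mu k^{\theta-2}}{\mu k^{\theta-1}+1+\lambda}\right],
\]
but both lead immediately to \eqref{eqsoly}.
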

\begin{proof}
Substituting \eqref{eqR} into the equation \eqref{eqmrs} we get
\begin{equation}\label{eqdify1}
\frac{dy}{y}=\frac{dk}{(1+\lambda)k+\mu k^\theta}.
\end{equation}
The above equation can be written as
\begin{equation}\label{eqdifysv11}
\frac{dy}{y}=\frac{1}{1+\lambda}\left[\frac{dk}{k}+\frac{1}{1-\theta}\frac{ \mu\left(\theta-1\right)k^{\theta-2}dk}{\mu k^{\theta-1}+1+\lambda}\right].
\end{equation}
The solution of the above equation is given by \eqref{eqsoly}.
In terms of $K$ and $L$ we obtain the following production function
\begin{equation}\label{eqsoly1}
F(K,L)=\psi\left[\left(1+\lambda\right) K^{1-\theta}L^{\lambda(1-\theta)}+\mu L^{\left(1+\lambda\right)\left(1-\theta\right)}\right]^{\frac{1}{\left(1+\lambda\right)\left(1-\theta\right)}},
\end{equation}
and thus the proof is completed.
\end{proof}
The equation \eqref{eqsoly} suggests us the existence of a new relationship, different from that supposed by Liu and Hildebrand or by Lu and Fletcher.  Taking the derivative of $y$ with respect to $k$, into the equation \eqref{eqsoly}, yields
\begin{equation}\label{eqdery}
\ln(y)=\varphi\omega\ln\left(\psi\right)
+\varphi\ln\left(y^{\prime}\right)+\varphi\theta\ln(k),
\end{equation}
where $\varphi=\frac{1}{\lambda(\theta-1)+\theta}$ and $\omega=\left(1+\lambda\right)(\theta-1)$.
This one is a log-linear relationship between output per-capita $y$, marginal product of capital (the wage of capital) $r=y^{\prime}$ and the capital labor ratio $k$.
Let us now consider the following relationship between the three variables.
\begin{equation}\label{eqnllr}
\ln(y)=\ln\left(a\right)+b\ln\left(r\right)+c\ln(k),\;a>0,\;b>0,\;c>0.
\end{equation}
The equation \eqref{eqnllr} may be successively rewritten
\begin{equation}\label{eqnllr11}
y=a\left(\frac{dy}{dk}\right)^bk^c\Rightarrow \frac{dy}{y^{\frac{1}{b}}}=a^{-\frac{1}{b}}\frac{dk}{k^{\frac{c}{b}}},
\end{equation}
so that integrating one obtains
\begin{equation}\label{eqnllr3}
y=\left[\frac{(1-b)a^{-\frac{1}{b}}}{c-b}k^{\frac{b-c}{b}}+\frac{\xi(b-1) }{b}\right]^{\frac{b}{b-1}},\;b\neq c,\;b\neq 1\;\mbox{and}\; b\neq 0,
\end{equation}
where $\xi$ is a constant of integration. Identifying the corresponding equations, we get:
\begin{equation}\label{eqparam}
\theta=\frac{c}{b},\;\lambda =\frac{c-1}{b-c},\;\psi=a^{\frac{1}{1-b}}\;
\mbox{and}\;\mu=\frac{\xi(b-1)a^{\frac{1}{b}}}{b}.
\end{equation}
We can write \eqref{eqnllr3} more symmetrically by setting $$\frac{(1-b)a^{-\frac{1}{b}}}{c-b}+\frac{\xi(b-1)}{b}=\gamma^{\frac{b-1}{b}}\;\mbox{and}\;
\frac{(1-b)a^{-\frac{1}{b}}}{c-b}\gamma^{\frac{1-b}{b}}=\delta,$$ to obtain
\begin{equation}\label{eqnllr4}
y=\gamma\left[\delta k^{\frac{b-1}{b}}k^{\frac{1-c}{b}}+(1-\delta)\right]^{\frac{b}{b-1}}.
\end{equation}
This production function has the same form as the $CES$ function excepting the term $k^{\frac{b-1}{b}}$ multiplied by $k^{\frac{1-c}{b}}$.

Now, via the relations \eqref{eqmrs} and \eqref{eqefs}, the marginal rate of substitution and the elasticity of substitution, together with their derivatives can thus be determined, to obtain:
\begin{equation}\label{eqmrsk}
R(k)=\frac{1-c}{c-b}k-\frac{\xi(1-b)a^{\frac{1}{b}}}{b}k^{\frac{c}{b}},
\end{equation}
\begin{equation}\label{eqdmrsk}
R^{\prime}(k)=\frac{1-c}{c-b}-\frac{\xi c(1-b)a^{\frac{1}{b}}}{b^2}k^{\frac{c}{b}-1},
\end{equation}
\begin{equation}\label{eqsigmak}
\sigma(k)=b\frac{b(1-c)k-\xi(1-b)(c-b)a^{\frac{1}{b}}k^{\frac{c}{b}}}
{b^2(1-c)k-\xi c(1-b)(c-b)a^{\frac{1}{b}}k^{\frac{c}{b}}},
\end{equation}
\begin{equation}\label{eqdsigmak}
\sigma^{\prime}(k)=\xi(1-b)(1-c)(c-b)\frac{b(c-b)^2a^{\frac{1}{b}}k^{\frac{c}{b}}}
{\left[b^2(1-c)k-\xi c(1-b)(c-b)a^{\frac{1}{b}}k^{\frac{c}{b}}\right]^2}.
\end{equation}
If we take the derivative $wrt$ $k$ into the relation \eqref{eqnllr}, we get
  \begin{equation}\label{eqsigmac}
  \sigma(k) = b\frac{y-ky^{\prime}}{cy-ky^{\prime}},
  \end{equation}
and since $\sigma(k)$ is a non-negative function of $k$, it follows that $cy-ky^{\prime} > 0$ and therefore what we need is $c > \beta$, where $\beta$ is the relative share of capital and this is the only information we can provide concerning the constant $c$. (Observe that our restriction is exactly the opposite of that obtained by Lu.)  From the relation \eqref{eqsigmac}
we deduce that the elasticity of substitution will be a constant function $(\sigma = b)$, if and only if $c=1$. We will prove later in our paper that for $c = 1$, our new production function reduces to the $CES$ function. Concerning the constant $b$, we suppose that $b<1$ (as in the doctoral thesis of Lu), even if we do not have enough arguments to justify this hypothesis.

As it is well-known,  the marginal rate of substitution is a positive increasing function of $k$. Because $R(0) = 0$, it follows that this requirement is fulfilled if its derivative is positive, that is
\begin{equation}\label{eqcondderR}
\frac{c-1}{b-c}+\frac{\xi c(b-1)a^{\frac{1}{b}}}{b^2}k^{\frac{c}{b}-1}>0.
\end{equation}
Consequently, in order to ensure that $R$ is a positive increasing function for some relevant range of $k$, we have to impose that $\xi < 0$ and thus we can distinguish the following three alternatives for the elasticity of substitution.
\begin{enumerate}
  \item [i.] If $b < c < 1$, then $\sigma^{\prime}<0$ and therefore the elasticity of substitution is a positive decreasing function and
      $$\lim\limits_{k\leftarrow \infty}\sigma(k)=\frac{b}{c}<1,$$
      that is, our production function converges to a $CES$ function.
  \item [ii.] If $c < b < 1$ then $\sigma^{\prime}>0$ and therefore the elasticity of substitution is a positive increasing function and
      $$\lim\limits_{k\leftarrow \infty}\sigma(k)=1,$$
      value that characterizes a Cobb-Douglas function.
   \item [iii.] If $c>1$ then $\sigma^{\prime}>0$ and therefore the elasticity of substitution is a positive increasing function and
      $$\lim\limits_{k\leftarrow \infty}\sigma(k)=\frac{b}{c}<1,$$
      that is, the limit production function is again a $CES$ function.
\end{enumerate}
If we express the elasticity of substitution in term of the marginal rate of substitution, then, via the relations \eqref{eqmrsk} and \eqref{eqsigmac} we get:
\begin{equation}\label{eqsigmatmrs}
\sigma=b\frac{R}{cR+(c-1)k}=b\frac{\frac{c-1}{b-c}k+\frac{\xi(b-1)a^{\frac{1}{b}}}{b}k^{
\frac{c}{b}}}{\frac{b(c-1)}{b-c}k+\frac{\xi c(b-1)a^{\frac{1}{b}}}{b}k^{
\frac{c}{b}}}.
\end{equation}
In the above relation, $\frac{R}{cR+(c-1)k}$ can be interpreted as a correction term applied to the constant $b$. This correction term has the following important property:
$$\lim\limits_{k\rightarrow\infty}\frac{R}{cR+(c-1)k}=\left\{\begin{array}{c}
                                       \frac{b}{c}<1 \;\mbox{if}\; c > b,\\\\
                                       1 \;\mbox{if}\; c \leq b.
                                     \end{array}\right.$$
From the above alternatives, we can deduce that the elasticity of substitution is a decreasing function of $k$ (possibly with higher values than one), only in the case when $b<c<1$. Otherwise, sigma will be always a function with values less than one.

The relations \eqref{eqnllr}, \eqref{eqnllr3}, \eqref{eqmrsk} and \eqref{eqsigmak} enable us to establish the following particular cases:
\begin{enumerate}
  \item [i.] If $b = 0$, then from equation \eqref{eqnllr} we have $\ln(y)=\ln\left(a\right)+c\ln(k)$ and the production function given by
      $$y=ak^c,$$
      that is a Cobb-Douglas production function. This is equivalently to say that $\theta = 1$ and from equation \eqref{eqdify1} we get $y = Ak^{\frac{1}{1+\lambda+\mu}}$. The unknown parameters will thus be given by: $A = a$ and $\lambda+\mu = \frac{1-c}{c}$.
  \item [ii.] If $c = 1$, then from equations \eqref{eqnllr} and \eqref{eqnllr3} we obtain the production function given by $$y=\left[a^{-\frac{1}{b}}k^{\frac{b-1}{b}}+\frac{\xi(b-1)}{b}\right]^{\frac{b}{b-1}},$$ that is a $CES$ production function with constant elasticity of substitution equal to $b$. This is equivalently to say that $\lambda=0$ and from equation \eqref{eqR} we have $R(k)=\mu k^\theta$ and the production function given by equation \eqref{eqsoly} yields
      $y=\psi\left[k^{1-\theta}+\mu\right]^{\frac{1}{1-\theta}}.$
      The unknown parameters will thus be given by:
      $\psi = a^{\frac{1}{1-b}}$, $\theta = \frac{1}{b}$ and $\mu = \frac{\xi(b-1)a^{\frac{1}{b}}}{b}$. For this value of $c$, the relation \eqref{eqnllr} can also be written:
  $$ln(y)=\ln\left(a\right)+b\ln\left(r\right)+\ln(k)\Leftrightarrow ln\left(\frac{F}{K}\right)=\ln\left(a\right)+b\ln\left(r\right),$$ and thus we obtain another log linear relationship, different to that proposed by Arrow et al., the authors of the $CES$ production function, this time between $\frac{F}{K}$ and $r$.
\end{enumerate}
Finally we can claim that our production function is a more general form which includes the Cobb-Douglas function and the $CES$ function as particular cases.
\section{Econometric analysis and conclusions}
The main aim of this section is to estimate the parameters of the new production function, to compare these results with those of the other production functions and finally to give some conclusions. In order to do this we use the data for the economy of the United States, presented in the paper of Sato and Hoffman and then we estimate the parameters of the new production function via the equation \eqref{eqnllr}. Proceeding in this way, we can compare our results with the results of Sato $(1970)$ and, David and Klundert $(1965)$, results obtained via the same set of data.

The results of the econometric analysis provide the following regression estimate:
$$\ln(y) = \underset{(0.142659)}{\mathrm 0.773454}+\underset{(0.046838)}{\mathrm 0.934369}\ln(r)+
\underset{(0.065665)}{\mathrm 1.191951}\ln(k).$$
Substituting these results into the relations \eqref{eqmrsk}, \eqref{eqdmrsk}, \eqref{eqsigmak} and  \eqref{eqdsigmak} we obtain the marginal rate of substitution, the elasticity of substitution and their derivatives:
$$R=-0.745203k-0.160728\xi k^{1.275675},$$
$$R^{\prime}=-0.745203-0.205039\xi k^{0.275675},$$
$$\sigma=0.934369\frac{0.179353k+0.038683\xi k^{1.275675}}{0.167582k +0.046109\xi k^{1.275675}},$$
$$\sigma^{\prime}=\frac{-0.000460\xi k^{1.275675}}{\left[0.167582k +0.046109\xi k^{1.275675}\right]^2}.$$
To ensure that $\sigma$ is a positive function and $R$ is a positive increasing function for some relevant range of $k>0$, accordingly with the consequences presented in the previous section  we can chose $\xi=-3.79$, corresponding to a starting value $k_0 = 2.0799$ (see the paper of Sato and Hoffman). We can observe that the elasticity of substitution $\sigma$ is an increasing function, whose limit equals $0.784$. The trajectories of $\sigma$ and $R$ are presented in the following graphs.
\begin{center}
\includegraphics[width=5.5cm]{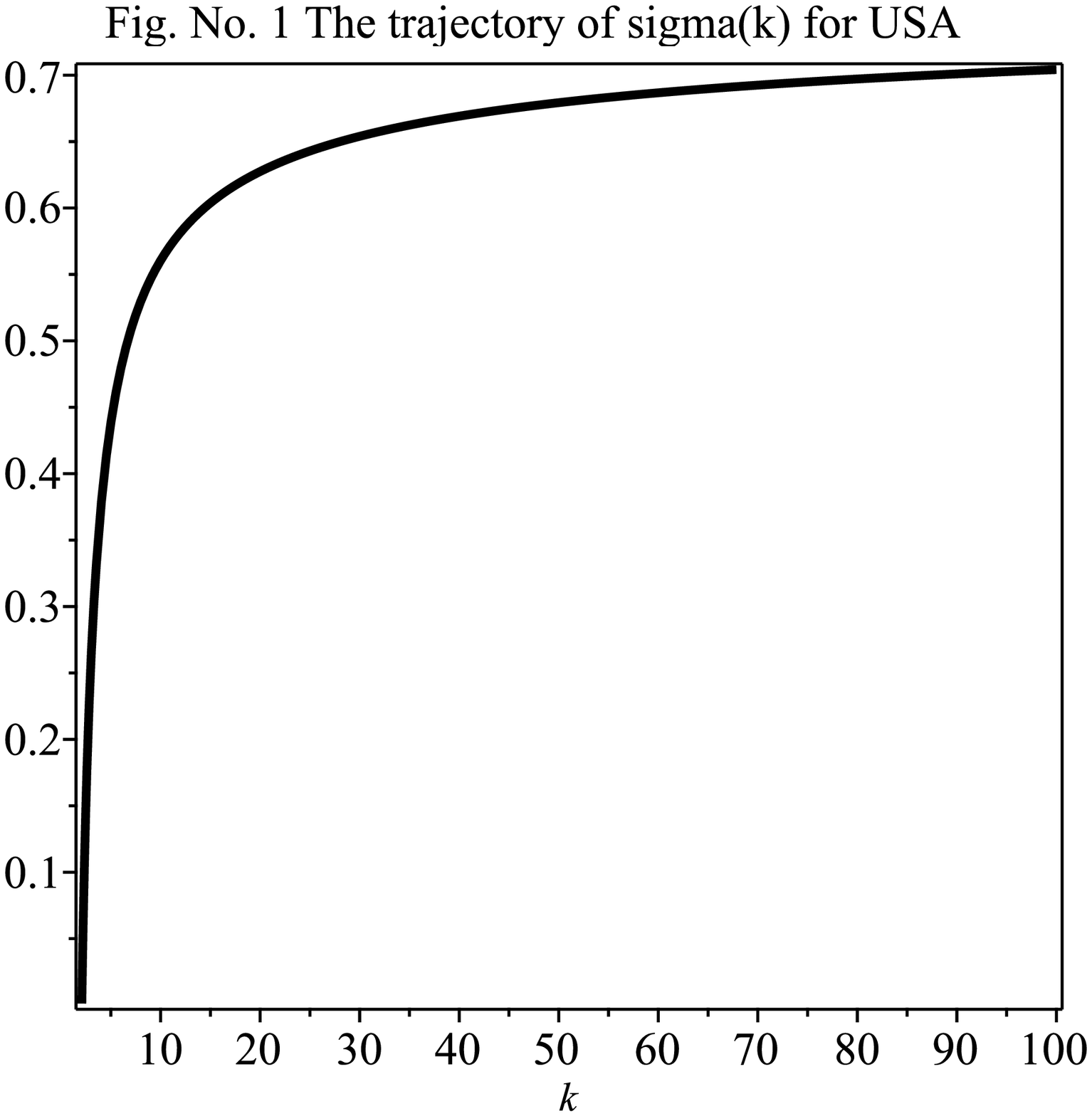}\;\;\;\;\;\includegraphics[width=5.5cm]{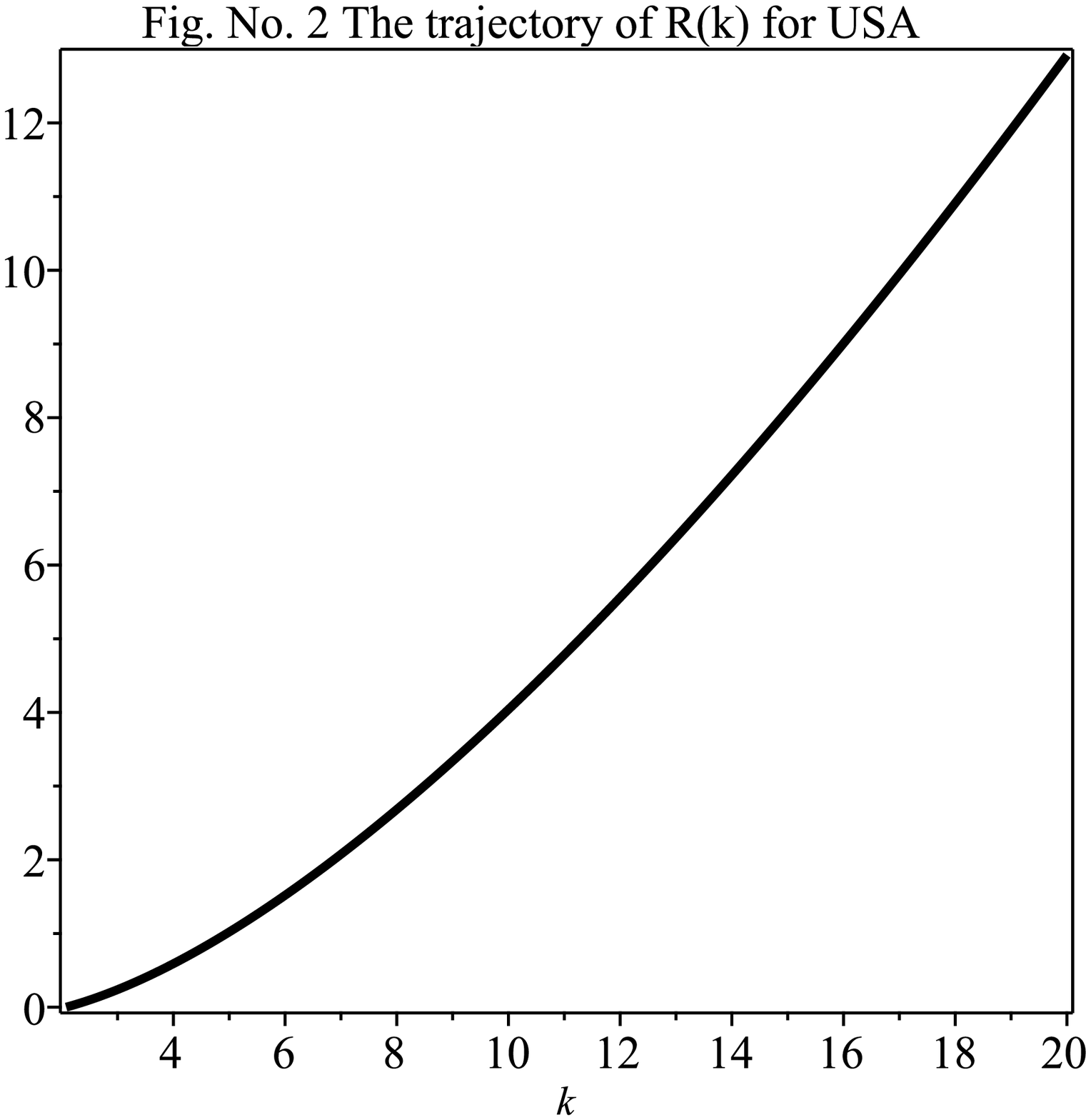}
\end{center}
In the next step we estimate the parameters of the production function of Liu and Hildebrand via the of equation \eqref{eqLH}. The results of the econometric analysis provide the following regression estimate:
$$\ln(y) = \underset{(0.057614)}{\mathrm 0.337698}+\underset{(0.022062)}{\mathrm 0.942627}\ln(\omega)+
\underset{(0.052371)}{\mathrm 0.057061}\ln(k).$$
As we can observe, the standard error of $c= 0.057061$ is too high $(0.052371)$ and thus it is rather likely to have $c = 0$ than to have $c \neq 0$. Even if we accept the hypothesis that $c \neq 0$, we can notice that $b+c \approx 1$ and therefore one of the necessary restriction of this production function is not respected (the marginal rate of substitution is always equal to zero). Consequently we can conclude that, this production function is not appropriate for the economy of the USA. If $c = 0$ and $b \neq 1$ then, according to the properties of this production function, we are in the case of a $CES$ production function.

Analyzing the same date as those presented in this paper, Sato $(1970)$, concludes that it is more natural to assume that the economy is operating under a production function with a variable elasticity of substitution rather than with a fixed elasticity, contradicting thus the results obtained by David and Klundert $(1965)$, via the same set of data and assuming a $CES$ production function. He also claims that:
\begin{enumerate}
  \item [a.] The elasticity of factor substitution is most likely less than unity (between $0.5$ and $0.7$).
  \item [b.] The Cobb-Douglas production function is not appropriate for the explanation of the $U. S.$ economy.
\end{enumerate}
We can observe that the results are completely different for the two production functions. The estimates obtained using the new production function seem to correspond much better, both in terms of empirical and theoretical evidence.

\end{document}